
\documentclass[10pt,journal,compsoc]{IEEEtran}
%


%

%
\ifCLASSOPTIONcompsoc
  \usepackage[nocompress]{cite}
\else
  \usepackage{cite}
\fi
%

%
\ifCLASSINFOpdf
   \usepackage[pdftex]{graphicx}
\else
\fi
%
%

%
\usepackage{amsmath}
%

%
\usepackage{amssymb}
\usepackage{amsfonts}


\usepackage{nameref}
\usepackage{amsthm}

\newcommand{\kmer}{$k$-mer\xspace}
\newcommand{\kmers}{$k$-mers\xspace}

\newtheorem{definition}{Definition}
\newtheorem{thm}{Theorem}

\newcommand\MyHead[2]{%
\multicolumn{1}{|l|}{\parbox{#1}{\centering #2}}}

\newcommand{\overbar}[1]{\mkern 1.5mu\overline{\mkern-1.5mu#1\mkern-1.5mu}\mkern 1.5mu}

\usepackage{xspace}
\usepackage{algorithm}
\usepackage{algorithmic}
\usepackage{bbm}
\usepackage[flushleft]{threeparttable}
\usepackage{subfigure}
\usepackage{url}

\hyphenation{op-tical net-works semi-conduc-tor}

\begin{document}
%
\title{Maximum Likelihood de novo reconstruction of viral populations using paired end sequencing data}
%
%
%
%

\author{Raunaq~Malhotra,
	Steven~Wu, 
        Manjari~Mukhopadhyay,
        Allen~Rodrigo,
        Mary~Poss,
        and~Raj~Acharya,~\IEEEmembership{Fellow,~IEEE}
\IEEEcompsocitemizethanks{
\IEEEcompsocthanksitem R. Malhotra, M. Mukhopadyay, and R. Acharya are with The School of Electrical Engineering and Computer Science, The Pennsylvania State University, University Park, PA, 16802, USA.\protect\\
E-mail: {raunaq@psu.edu}
\IEEEcompsocthanksitem S. Wu is with The Biodesign Institute, Arizona State University, Tempe, AZ, 85281, USA.
\IEEEcompsocthanksitem A. Rodrigo is with The Research School of Biology, Australia National University, Canberra, ACT 2601, Australia. 
\IEEEcompsocthanksitem M. Poss is with The Department of Biology, Pennsylvania State University, University Park, PA, 16802.
}
\thanks{Submitted }}

\IEEEtitleabstractindextext{%
\begin{abstract}
We present MLEHaplo, a maximum likelihood \textit{de novo} assembly algorithm for reconstructing viral haplotypes in a virus population from paired-end next generation sequencing (NGS) data. Using the pairing information of reads in our proposed Viral Path Reconstruction Algorithm (ViPRA), we generate a small subset of paths from a De Bruijn graph of reads that serve as candidate paths for true viral haplotypes. Our proposed method MLEHaplo then generates a maximum likelihood estimate of the viral population using the paths reconstructed by ViPRA. We evaluate and compare MLEHaplo on simulated datasets of 1200 base pairs at different sequence coverage, on HCV strains with sequencing errors, and on a lab mixture of five HIV-1 strains. MLEHaplo reconstructs full length viral haplotypes having a 100\% sequence identity to the true viral haplotypes in most of the small genome simulated viral populations at 250x sequencing coverage.  While reference based methods either under-estimate or over-estimate the viral haplotypes, MLEHaplo limits the over-estimation to 3 times the size of true viral haplotypes, reconstructs the full phylogeny in the HCV to greater than 99\% sequencing identity and captures more sequencing variation for the HIV-1 strains dataset compared to their known consensus sequences. 
\end{abstract}

\begin{IEEEkeywords}
de novo assembly, maximum likelihood estimation, viral haplotype reconstruction, complete phylogeny, backward elimination, paired-end reads de novo assembly.
\end{IEEEkeywords}}

\maketitle

\IEEEdisplaynontitleabstractindextext

%
\IEEEpeerreviewmaketitle

\IEEEraisesectionheading{\section{Introduction}\label{sec:introduction}}

%
%
%
%
\IEEEPARstart{V}{iruses} 
replicating within a host exist as a collection of closely related genetic variants known as viral haplotypes. The diversity in a viral population, or quasispecies, is due to mutations (insertions, deletions or substitutions) or recombination events that occur during virus replication. These haplotypes differ in relative frequencies and together play an important role in the fitness and evolution of the viral population \cite{Boerlijst,bruen2007recombination,domingo,domingo2012viral}.

Viral population diversity was traditionally studied by Sanger sequencing of Polymerase Chain Reaction (PCR) products. This approach has inherent problems that the sample diversity is inadequately assessed, errors are introduced by PCR, and the complete genome of the virus often is not analyzed. The next generation sequencing (NGS) technologies, on the other hand, can sample millions of short reads from the complete viral population. However, they introduce a new challenge of assembling an unknown number of original haplotypes from the short reads. The presence of sequencing errors in short reads in particular confounds the true mutational spectrum in the viral population because viral error rates and sequencing error rates are similar. 
Several algorithms have been proposed for reconstructing haplotypes in a population using a high-quality alignment of short reads to a reference viral genome (See \cite{Niko} for review). Reference based reconstruction of viral haplotypes can be performed either locally or globally. Local haplotype estimation involves inferring haplotypes in short segments along a viral reference sequence to which the reads are aligned \cite{shorah,Prabhakaran}. Error correction is generally performed before \cite{e2,Zagordi,Zagordi2,Prosperi} or along with local haplotype estimation \cite{shorah,QuasiRecomb,Prabhakaran,prosperi2012qure}. Global estimation involves estimating viral haplotypes over segments larger than the length of reads. For global estimation, the reads aligned to a reference or to a consensus sequence are generally arranged in a read-graph, where vertices in the graph are reads and edges denote overlaps amongst the aligned reads. As the number of haplotypes in a population is unknown, an optimal set of viral haplotypes that explains the read-graph is obtained. There are a number of optimization frameworks for analyzing the read-graph \cite{vispa}, for example by modeling it as a network flow problem \cite{westbrooks,viralMCF}, minimal cover formulation, and as a maximum bandwidth paths formulation \cite{Eriksson}. Probabilistic models have also been explored for local haplotype estimation \cite{Zagordi,Prabhakaran}, global haplotype estimation \cite{Zagordi2,Eriksson}, and for analyzing recombinations amongst viral haplotypes \cite{QuasiRecomb}. 

The reference based methods rely on alignment of reads to a reference sequence for error correction, orientation of reads, and for reconstruction of the viral population. A reference sequence can be helpful when it is representative of all haplotypes present in the viral population. However, due to the presence of insertions and deletions, recombination and high mutation rates in some viral populations (e.g. RNA viruses), a large percentage of the reads are unaligned to the reference genome, and are ignored while estimating viral diversity. Additionally, the reference based algorithms are known to have high false-positive rates in reconstruction of viral haplotypes and over-estimate the number of viral haplotypes in datasets with low sequencing diversity \cite{benchmarking,prosperi2013empirical}, which is expected in a virus population obtained from an infected individual. 

\textit{De novo} methods for assembling viral haplotypes provide an alternative to reference-based haplotype estimation, where the viral haplotypes are reconstructed directly from the sampled reads. There are two broad types of \textit{de novo} assembly methods: overlap-layout-consensus and De Bruijn graph based methods. These methods have been widely used for assembly of reads from a single genome. For viral populations, overlap-layout-consensus based \textit{de novo} assembly methods have been used to generate a consensus viral sequence \cite{yang2012novo}. Here, similar short reads are aligned to each other to generate a multiple sequence alignment (MSA) of the reads. A consensus viral sequence is then obtained by a base-by-base majority voting from the MSAs. However, this method assembles consensus sequences of highly diverse viral populations and is not suitable if the aim is to investigate the diversity of the viral population. De Bruijn graph based \textit{de novo} methods first break the short reads into \kmers which form vertices in the graph and edges are drawn in the graph between overlapping \kmers observed in the short reads. For assembly of reads into a single genome, a single path in the De Bruijn graph that visits every vertex can be used to represent the genome sequence. However, reconstructing the viral haplotypes in a population from the De Bruijn graph would involve simultaneously assembling multiple paths that collectively visit every vertex (a graph \textit{cover}) and each path represents a single viral haplotype in the population. The problem of simultaneous assembly from short reads in a single individual is computationally intensive and challenging even for estimating the two paths corresponding to diploid strains in the individual \cite{hapcut}.
 
With the availability of paired-end sequencing data, methods that explicitly incorporate paired end read information in an aligned read graph for reconstructing viral haplotypes have been explored \cite{HaploClique,vga,mangul2014accurate}. Here, instead of computing a minimal set of haplotypes that explains all the paired-reads, the method iteratively merges all the fully connected clusters of reads (max-cliques) in the read-graph to generate viral haplotypes. The method, however, has exponential time complexity in the read coverage \cite{HaploClique}. On the other hand, the problem of reconstructing a minimal set of haplotypes under the constraints of paired-end reads from a read-graph or a De Bruijn graph is NP-hard \cite{rizzicomplexity,dedui}. Thus, only heuristic algorithms for estimating viral haplotypes from paired-end sequencing reads are possible \cite{rizzicomplexity}. 

We present a maximum likelihood based \textit{de novo} assembly algorithm for estimating viral haplotypes using paired-end sequencing data. The main advances made in this paper are (i) we utilize the pairing information of the paired reads to compute a score for paths in a De Bruijn graph that is generated from the overlaps in the reads, (ii) we develop a novel polynomial time heuristic algorithm, Viral Path Reconstruction Algorithm (ViPRA), that generates $M-$ paths corresponding to top $M$ scores through every vertex in the De Bruijn graph, and (ii) we utilize an algorithm MLEHaplo that provides a maximum likelihood estimate of the viral population based on the proposed generative model.

We extensively evaluate our algorithm on simulated datasets varying over sequence diversity, genome lengths, coverage, and presence of sequencing errors. As a replicating viral population consists of viral haplotypes that are derived from a recent common ancestor, our simulated datasets are modeled using a Bayesian coalescent simulator \cite{simcoal,bayessc} which generates sequences derived from a realistic evolutionary model of divergence. Our goal is to successfully reduce the number of paths while retaining the true haplotypes and to recover the viral population accurately and with high precision. Our results demonstrate that MLEHaplo is able to retain the phylogenetic signature in all simulated datasets where there is phylogenetic support for all haplotypes in the population, and it generates full length or near full length haplotypes. We compare our results to existing reference and \textit{de novo} assembly based methods, and observe that MLEHaplo does not suffer from a high false positive rate and a bias towards the reference sequence which occurs in the reference based methods. 

\section{Methods}
\label{methods}
\subsection{Definitions}
A viral population $\mathbf{H}$ is a collection of viral haplotypes $\{H_1,H_2,\ldots, H_P\}$, where each haplotype $H_i$ is a string of length $GS_i$ defined over the alphabet $\Sigma = \{A,G,C,T\}$. The sequence similarity between any two haplotypes $H_i, H_j \in \mathbf{H}$ is assumed greater than 90\% as they are all generated via mutations and recombinations on a recent common ancestral sequence.  A paired read $(R_f,R_r)$ is a pair of two strings over the alphabet $\Sigma$ that are sequenced using NGS technologies from an $IS$ length segment of haplotype $H_i \in \mathbf{H}, i \in \{1,2,\ldots,P\}$. The collection of paired reads sampled from the viral population $\mathbf{H}$ is denoted as $\{(R_{1f},R_{1r}), (R_{2f},R_{2r}), \ldots, (R_{nf},R_{nr}) \}$.  The statistics of insert length $IS$ for the NGS reads are known, although the insert length for a particular paired read is unknown. The average number of times a segment of the viral haplotype is sequenced is known as the coverage of the NGS sequencing. 

A substring $u$ of length $k$ from $H_i, R_f, $ or $R_r$ is known as a \kmer and the reverse complement of the \kmer $u$ is denoted as $\overbar{u}$. The number of times a \kmer $u$ is observed in the sampled reads is known as count of the \kmer $u$. The two \kmers spanning a $(k+1)$ length string in a read are known as consecutive \kmers. A De Bruijn graph is a representation of the sampled reads and consists of \kmers as its vertices. Directed edges are drawn in the graph between consecutive \kmers from the first to the second \kmer. The vertices representing consecutive \kmers are known as consecutive vertices, and a sequence of consecutive vertices in the graph is known as a \textit{path}. A path starting at a vertex with no edges into it (\textit{source} vertex) to a vertex with no edges going out of it (\textit{sink} vertex) is known as a \textit{source-sink} path in the graph.

\subsection{Pre-processing of reads: Error Correction, De Bruijn Graph Construction and Paired Constraints Set}
\label{dbgraph}
\subsubsection{Error Correction} 

In this paper, error-free paired reads are simulated first for calibration of the software. For NGS simulated reads and reads from real sequencing studies, error correction software BLESS \cite{bless} was used for error correction. As this software has been developed for reads sequenced from a single genome, we perform additional error-correction when reconstructing paths in the De Bruijn graph. A path that either begins or ends in vertices corresponding to \kmers that have \kmer counts less than the threshold for BLESS correction (also known as tips in the De Bruijn graph) are removed. These vertices typically correspond to sequencing errors, and this technique has parallels in existing softwares for assembly of single genomes \cite{spades}. 

\subsubsection{De Bruijn Graph} The \kmers from the error corrected paired reads are represented in a De Bruijn graph. As the orientation of the sequenced reads are unknown, \kmers from the paired reads and their reverse complements are stored in the graph. In order to reduce storage space, the De Bruijn graph $G$ is converted into a condensed graph $G_c$ in which linear chains of vertices are condensed into a single vertex, while preserving the edge relationships in the graph $G$. This technique has also been used in the assembler SPADES for single genomes \cite{spades}. 

For viral populations, specially for RNA viruses, it is possible to obtain an acyclic De Bruijn graph for reasonable values of $k$ (around 50-60) as the length of repeats in the viral genomes are typically small (except for terminal repeats in some viral genomes). Two properties of a directed acyclic De Bruijn graph $G$ or the condensed graph $G_c$ are particularly useful for reconstructing sequence of viral haplotypes. First, the insert size $IS$ of a particular paired read can be computed using the distance $d(u,v)$ between two vertices $u$ and $v$ corresponding to the beginning and end of the paired read. This distance $d(u,v)$ is defined as the length of strings spelled by the shortest $u-v$ path in the graph, which can be computed uniquely for a directed acyclic graph. Second, a haplotype of the viral population is spelled by a \textit{source-sink} path in the graph $G$ or $G_c$. A collection of such \textit{source-sink} paths that spans all vertices (a path \textit{cover}) represents the observed viral population. 

\subsubsection{Paired constraints set (PS)}  A Paired Constraints Set $PS$ is generated from the paired reads and contains a list all pairs of \kmers observed in the paired reads along with the number of times a \kmer pair is observed. The set $PS$ for graph $G_c$ can be computed once in time linear in the number of reads and quadratic in the length of the reads.
As there are millions of \textit{source-sink} paths in a typical graph for viral populations, the pairing information of the paired reads is used to guide their selection. The elements of the paired set $PS$ indicate whether a pair of \kmers appear together in one of the viral haplotypes in the population. 

\subsection{Scoring \textit{source-sink} paths in the graph using the set $PS$}
Consider a path $P= (u_1,u_2,u_3,\ldots,u_r)$ in the graph $G$, where $u_1$ is the \textit{source} vertex, $u_r$ the \textit{sink} vertex. The vertices ($u_1,u_2,u_3,\ldots,u_r$) correspond to the \kmers in the sampled reads. Assuming that all paired \kmers from the viral population separated by insert size distance $IS$ are sampled in the observed reads, we define a score $S(P)$ for the path $P$ using the elements of the paired set $PS$ as follows: 

\begin{multline}
\label{eq:06}
S(P) = \frac{1}{E(P)}\cdot \\ 
\sum_{(r,s) \in P \cap [d(s)-d(r)<IS]}{ \{ \mathbbm{1}{[(r,s) \in PS}]  - pen \cdot \mathbbm{1}{[(r,s) \notin PS}] \} }
\end{multline}

The score for the path $P$ is defined over vertex-pairs  $\{ (r,s) ; r \in P \mbox{ \& } s \in P \}$ that are within distance $IS$ on the path $P$. It is proportional to the number of vertex-pairs present in $PS$. The score is also penalized by a penalty term $pen$ for every vertex-pair in $P$ within $IS$ that is not present in $PS$, as under assumption of high coverage, all pairs of \kmers from true viral haplotypes within distance $IS$ would have been observed in the paired reads. The score is normalized by $E(P)$, the expected number of vertex-pairs in a path $P$ that are within the insert size $IS$. 

Paths in the graph that have high scores are candidates for possible viral haplotypes (See supplementary text for rationale). Moreover, a path \textit{cover} of the graph consisting of such paths is a candidate for the viral population $\mathbf{H}$. Equation \ref{eq:06} can be modified in the presence of sequencing errors, wherein, instead of membership in $PS$ within a distance $IS$, the contribution of a vertex pair $(u,v) \in PS$ is weighted by its relative frequency of occurrence. 

\subsection{ViPRA: A heuristic algorithm for estimating top $M$ paths per vertex}
\label{dynamicprogramming}
We propose Viral Paths Reconstruction Algorithm (ViPRA), a heuristic algorithm that computes a path \textit{cover} of the graph using high scoring $M$ paths per \textit{source} vertex in the graph $G$. ViPRA computes $M$ high scoring paths through a vertex using precomputed $M$ high scoring paths through the vertex's neighbors. It starts with the \textit{sink} vertices in the graph and iteratively builds on them using memoization of $M$ paths through each vertex to generate high scoring \textit{source-sink} paths in the graph. This is possible as the score for a path $P$ can be constructed using the scores of its sub-paths starting at a particular vertex to the end (See Algorithms 1, 2 in supplementary data).

The goal of the scoring mechanism $S(P)$ is to ensure that the paths corresponding to true haplotypes have a high score and thus they are retained in the top paths for a vertex as the algorithm propagates from the \textit{sink} vertex to the \textit{source} vertex. Thus, the choice of $M$ is important as it would affect the number of paths generated per \textit{source} vertex. For vertices that are not spanned by paths from the \textit{source} vertices, additional $M$ paths are generated through these vertices. 

\subsection{Maximum Likelihood Estimate of the viral population}
\label{generativemodel}
The path \textit{cover} generated by ViPRA is an over-estimation of the possible paths that represent the viral population. We generate a maximum likelihood estimate of the population using these paths as a starting point. 

The generative model for sampling paired reads from the viral population $\mathbf{H}$ is as follows: For a paired-read $(R_f,R_r)$ of insert length $IS$, it can be sampled from only a single location in the viral haplotype under the assumption that there are no long repeats in viral haplotypes. Thus, the probability of observing the paired read $(R_f,R_r)$ from the viral population $\mathbf{H}$ can be expressed as the ratio of number of haplotypes that share such a read segment to the total number of locations from which any paired read of length $IS$ can be sampled:

\begin{equation}
P((R_f,R_r)|\mathbf{H}) = \frac{q}{P\cdot(GS-IS+1)}=z_{R}
\label{eq:01}
\end{equation}

In this equation, $q$ is the number of haplotypes in $\mathbf{H}$ that have $(R_f,R_r)$ as their sub-string, $P$ is the number of haplotypes in $\mathbf{H}$, and $GS$ is the average length of the viral haplotypes. It should be noted that Equation \ref{eq:01} also holds if $(R_f,R_r)$ denotes a \kmer pair. 

Thus, for a collection of paired reads $\{(R_{1f},R_{2r}),(R_{2f},R_{2r}),\ldots, (R_{nf},R_{nr}) \}$ where $(R_{if},R_{ir})$ is sampled $c_i$ times, assuming independent sampling, the joint probability of observing the paired reads given the viral population $\mathbf{H}$ can be expressed as a multinomial expression:

%

\begin{multline}
P(\{c(R_{1f},R_{1r})=c_1,\ldots,c(R_{nf},R_{nr})=c_n\} | \mathbf{H}) =  \\ \frac{M!}{c_1!\cdot c_2! \ldots c_n!} \prod_{i=1}^{n} z_{R_i}^{c_i}
\label{eq:03}
\end{multline}

where, $M = \sum_{i=1}^{n}c_i = n$. 


Given the above formulation, a maximum likelihood set of haplotypes $\mathbf{H_{ml}}$ can be estimated using equation \ref{eq:03} as follows:

\begin{equation}
\mathbf{H_{ml}} = \max_{\forall \mathbf{H}} P(\{c(R_{1f},R_{1r})=c_1,\ldots,c(R_{nf},R_{nr})=c_n\} | \mathbf{H})
\label{eq:05}
\end{equation}

\subsection{Backward Elimination for estimating $\mathbf{H_{ml}}$}
\label{backwardelimination}
The top $M-$paths through all the \textit{source} vertices is used as candidates for possible haplotypes for computing the maximum likelihood set of haplotypes. The maximum likelihood set of haplotypes for a given dataset is computed using backward elimination. The likelihood computation begins with all the top $M-$paths through the \textit{source} vertices and iteratively remove one path from the set of all paths until the likelihood of the remaining paths in the set starts to decrease. The remaining set of haplotypes constitute a maximum likelihood estimate of the viral population. 

\subsection{Simulation Data Generation}
Viral haplotypes in a viral population are related to one another due to mutations or exchange of genome segments that accrue during replication. Thus, the viral haplotypes share recent common ancestor sequences from which all the viral haplotypes have originated. In order to model this, we generate the simulated viral populations using Bayesian Serial SimCoal Simulator (BayeSSC) \cite{simcoal,bayessc}. The simulator takes population parameters such as population size, mutation rates, and genome length as input and generates a set of sequences that have mutations with respect to a recent common ancestor sequence. The error-free Illumina sequencing paired-reads are simulated from these viral haplotypes using dwgsim (\url{https://github.com/nh13/DWGSIM}), which are used as input to evaluate our proposed method. 

\section{Results}
\label{results}
De Bruijn graph based methods are amenable to address the problem of recovering viral haplotype diversity from shotgun or amplicon sequence data if the challenges discussed above can be solved. Thus an overview of our approach is as follows. As sequencing errors increase the complexity in a De Bruijn graph, we first error correct the sequenced paired reads using existing error correction softwares and store the \kmers from the error corrected reads in the De Bruijn graph $G$, where the vertices are \kmers and consecutive \kmers in a read form edges (Figure~\ref{fig:01}). The vertices in the graph with no incoming edges are known as \textit{source} vertices, while vertices with no outgoing edges are known as \textit{sink} vertices. The paths in the graph starting at a \textit{source} vertex and ending in a \textit{sink} vertex are candidates for viral haplotypes. In order to prune through the millions of \textit{source-sink} paths in the graph, we store in a paired set the pairing information of \kmers observed in the error corrected reads along with their numbers of occurrence. Our proposed Viral Path Reconstruction Algorithm (ViPRA), a heuristic polynomial time algorithm, uses a parameter $M$ and the evidence from the paired set to retain a small number of \textit{source-sink} paths as candidate haplotypes. We limit the over-estimation of number of haplotypes in the viral population by our proposed maximum likelihood estimator, MLEHaplo. MLEHaplo estimates the likelihood that the reads obtained are derived from a set of full-length sequences in the viral population using the paths reconstructed by ViPRA as putative candidates for these sequences.

\begin{figure}[!t]
\centering
\includegraphics[width=3in]{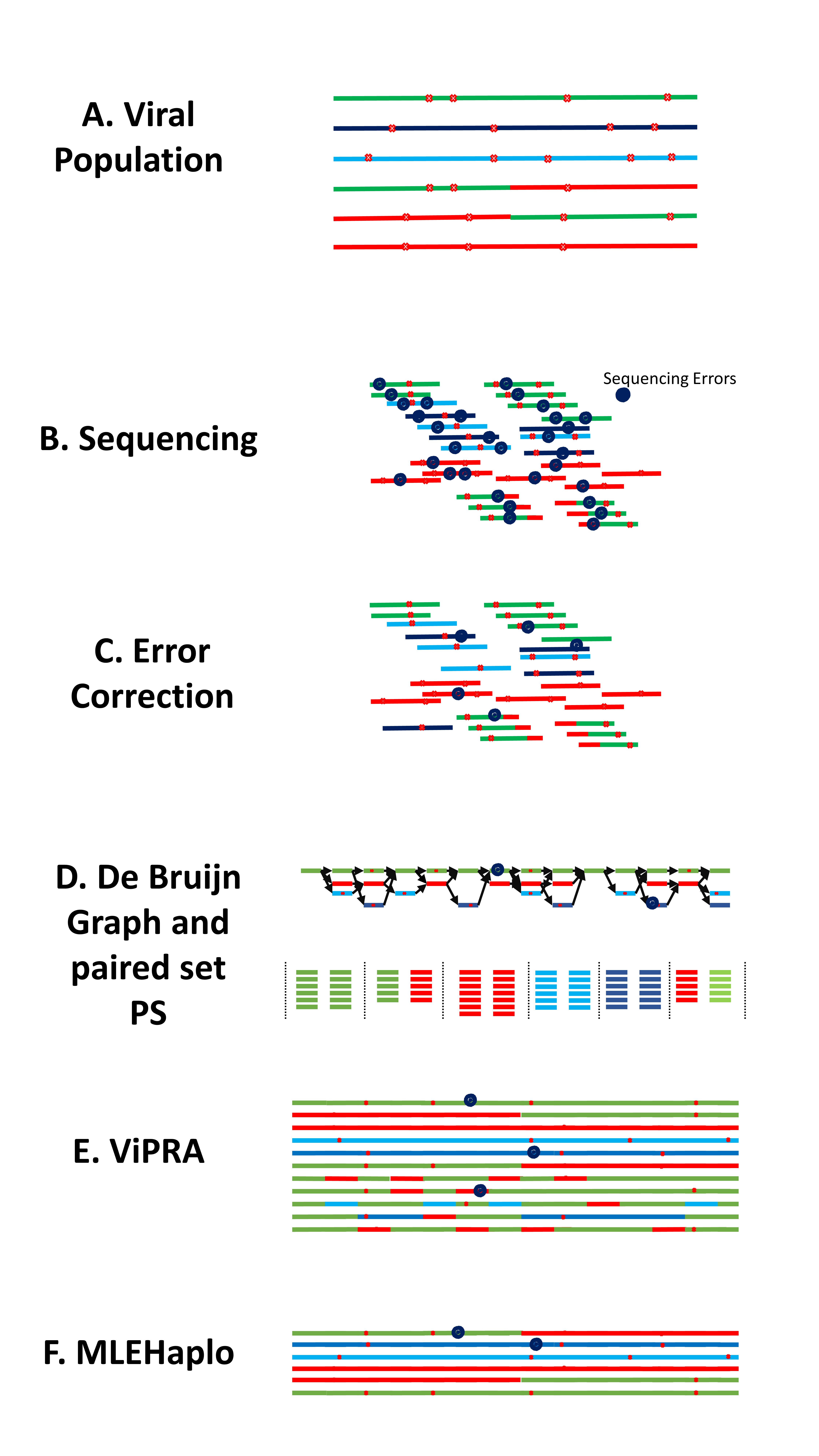}
\caption{ \textbf{Reconstruction of viral haplotypes using PE data:} (A) A viral population consisting of four viral haplotypes (red, green, dark blue, blue lines) which mutations (red dots) and a recombinant haplotype (red-green line) are depicted here. (B) Sequencing technology generates paired-reads with sequencing errors (black dots on short segments from reads). (C) Error correction using existing softwares generates error-free reads (D) The error-corrected reads are stored in a De Bruijn graph. The pairing information of a paired read constitutes all \kmer pairs observed in it, and it is stored in the set $PS$. Colors denote \kmers obtained from the corresponding viral haplotypes in (A). (E) The proposed heuristic algorithm ViPRA reconstructs only a fraction of the total number of paths in the De Bruijn graph, where paths reconstructed have a high score that measures the support of \kmer pairs from a path found in the set $PS$. (F) MLEHaplo reconstructs a maximum likelihood estimate of the viral population, where the sampled reads are modeled to be sampled from a viral population consisting of paths generated by ViPRA as the starting point. \label{fig:01} }
\end{figure}

\subsection{Simulation studies for partial genome reconstruction of viral populations}
Our first evaluation is on simulated datasets to test ViPRA's ability to reduce the total number of paths for possible viral haplotypes and to retain the paths for true haplotypes. The goal of MLEHaplo is to accurately reconstruct the true haplotypes from the output of ViPRA, while limiting the over-estimation of the viral population. The length of the simulated haplotypes are 1200 base pairs (bps), which resembles the length of a gene or a segment in commonly known viruses used for evolutionary studies. Error-free paired end reads are simulated from viral populations that were generated from a common ancestral sequence using a coalescent approach \cite{bayessc,simcoal} at a mutation rate of $10^{-5}$ per generation per nucleotide and a population size of 5000, which are typical evolutionary parameters for a replicating virus population \cite{drake,drake1993rates,sanjuan2010viral}. The average sequencing depth of the 150 bp paired reads is 250x and average insert size is 230 bp (standard deviation of 75 bp). Ten viral populations (denoted as D1-D10) were generated using a different seed sequence under the same evolutionary parameters, each of them containing seven phylogenetically related haplotypes. 

In order to reconstruct the viral haplotypes, the paired reads from each dataset (D1-D10) are stored in De Bruijn graphs with \kmer size of 60, while pairs of \kmers are stored in the paired set. Although the ten datasets are generated under the same evolutionary parameters, there is a 1000-fold difference in the total number of \textit{source-sink} paths present in the error-free De Bruijn graphs for these datasets (Table~\ref{Tab:totalpaths}). This indicates that the complexity of a graph for a viral population varies significantly even when there are no sequencing errors and under the same evolutionary parameters. 

\subsubsection{Evaluation of ViPRA with varying $M$} 
Our algorithm ViPRA prunes through millions of paths in the graph to generate a set of paths having a high paired-end support score, as explained below, that form candidates for viral haplotypes. It takes as input a parameter $M$, the De Bruijn graph $G$, and the paired set $PS$ for a given sequencing data. The output of ViPRA is a path \textit{cover} of the graph containing $M$ \textit{source-sink} paths for each \textit{source} vertex in the graph $G$. The $M$ paths per \textit{source} vertex are ranked based on their support present in the paired set $PS$. A negative penalty is applied to a path when there is no support found for its \kmer pairs within insert size $IS$ in the set $PS$ and we only consider paths with positive scores for evaluation. Additional $M$ paths are also generated through vertices not visited by any of the top $M$ paths generated from the \textit{source} vertices to obtain the path \textit{cover} of the graph. 

We assess the number of paths reconstructed by ViPRA with the variation in parameter $M$. The number of paths recovered is independent of the total number of \textit{source-sink} paths in $G$ for a given dataset but is directly proportional to the parameter $M$ and overall complexity of the graph $G$ (Figure~\ref{fig:02}, Table~\ref{Tab:totalpaths}). Paths for full length and partial length haplotypes are obtained for $M=(5,10)$, while full length paths are obtained for $M\geq15$ (Figure~\ref{fig:02}) indicating that not all vertices were visited by the \textit{source-sink} paths for small values of $M$. We also evaluate the ratio of the true haplotypes recovered by ViPRA to the number of true haplotypes present in the viral population, or \textit{recall}. There is $100\%$ recall of the seven true haplotypes in all the ten datasets for $M \geq 10$, and in nine out of ten datasets for $M=5$. This suggests that $M=10$ is sufficient for recovering the true haplotypes for datasets D1-D10. The paths reconstructed at $M=10$ are used by MLEHaplo for the maximum likelihood estimation.

\begin{figure*}[!t]
\centering
\includegraphics[width=5in]{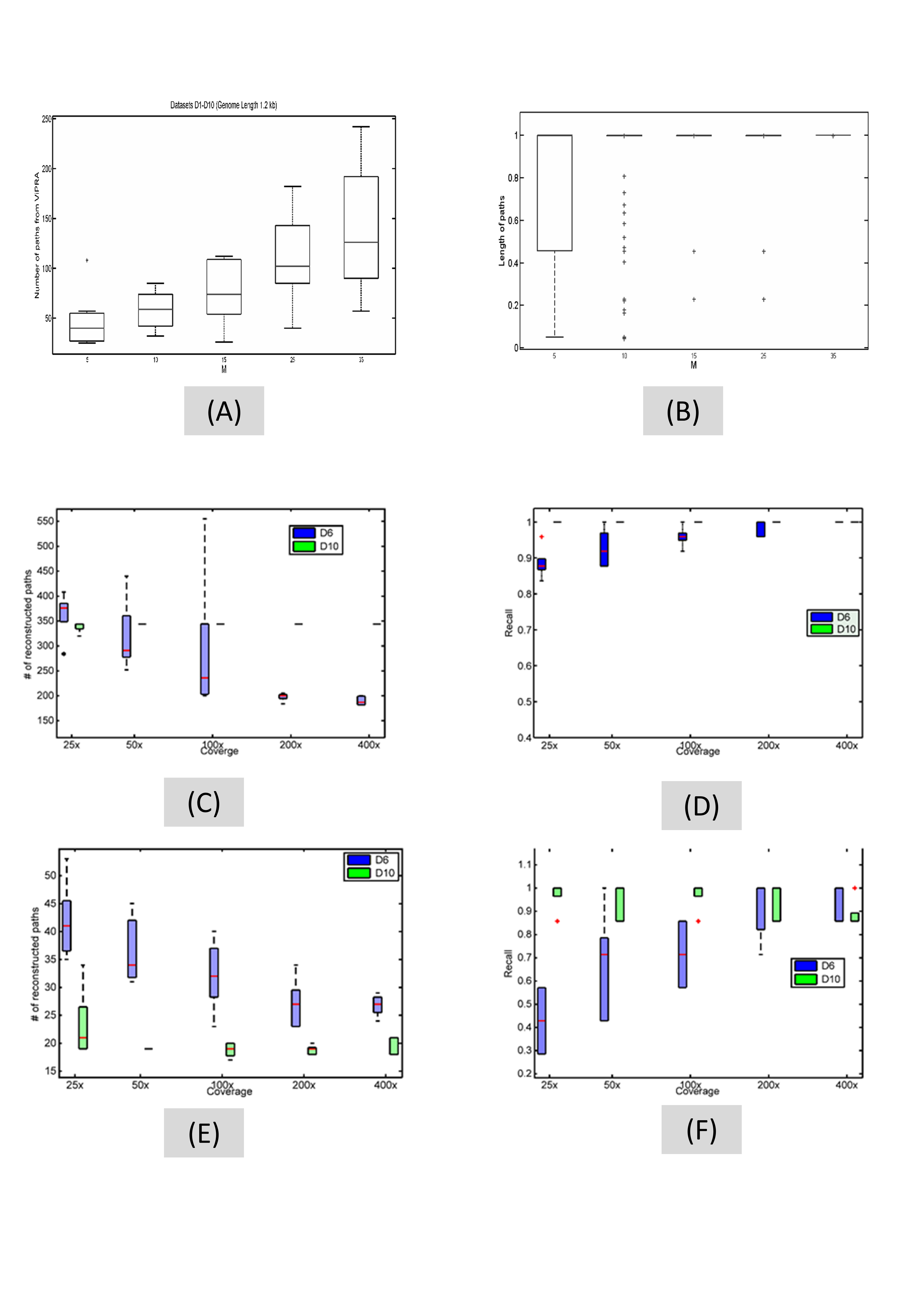}
\caption{\textbf{Evaluation of ViPRA and MLEHaplo on error-free data.} (A) A boxplot of the number of paths obtained from ViPRA when $M$ ranges between $5-35$ for datasets D1-D10 (B) A boxplot of length distribution of paths generated by ViPRA, presented as a fraction of the total length of 1200bp, for datasets D1-D10. Almost all the paths are full length for $M \geq 15$. (C-F) Performance of ViPRA and MLEHaplo with variation in sequencing coverage, at 25x, 50x, 100x, 200x, and 400x denoted on the x-axis of each figure, for datasets D6 and D10. (C) Boxplot of the total number of paths recovered by ViPRA in datasets D6 (blue) and D10 (green). At 50x or greater coverages, ViPRA reconstructs 344 paths for D10 (shown as a line at 344). (D) Boxplot of recall of true haplotypes for datasets D6 (blue) and D10 (second column at each coverage) in paths recovered by ViPRA. The recall of ViPRA is 100\% at all sequencing coverages for D10, thus appearing as a line at 1. (E) Boxplot of number of paths in the set $\mathbf{H_{ml}}$ reconstructed by MLEHaplo in datasets D6 (blue) and D10 (green). (F) Boxplot of MLEHaplo's recall for true haplotypes in datasets D6 (blue) and D10 (green). MLEHaplo's recall for D10 is $>90\%$ at all coverages, the recall for D6 increases with increasing coverage. Recall is defined as the fraction of the number of true haplotypes that have a reconstructed path with exact sequence match.\label{fig:02}}
\end{figure*}

\subsubsection{Evaluation of MLEHaplo} MLEHaplo generates a maximum likelihood estimate $\mathbf{H_{ml}}$ of the viral population using the paths reconstructed by ViPRA. It reduces the number of paths from ViPRA by iteratively removing one haplotype by backward elimination. The shape of the likelihood surface indicates that the likelihood is maximized when the set $\mathbf{H_{ml}}$ contains $7-10$ paths which includes the true number of viral haplotypes in datasets D1-D10 (Figure S1). MLEHaplo has 100\% recall in 6 out of 10 datasets, and recovers six out of seven true viral haplotypes in the remaining datasets with an exact sequence match to the true haplotypes (Table~\ref{Tab:totalpaths}).

\subsubsection{Phylogenetic content of the maximum likelihood estimate}
We compare the phylogeny of the true haplotypes to the set of reconstructed haplotypes $\mathbf{H_{ml}}$ for those datasets with less than 100\% recall. A bootstrap neighbor joining tree is generated using the predicted set $\mathbf{H_{ml}}$ and true set of viral haplotypes for datasets D1, D2, D6, and D8 (Figure S2). In these four datasets, six of the seven true haplotypes have a predicted haplotype with exact sequence match. For the seventh true haplotype, the set $\mathbf{H_{ml}}$ contains predicted haplotypes (hollow green circles) that cluster with the true haplotype (hollow red circles) in the phylogenetic tree with near zero branch lengths. The data also demonstrates that MLEHaplo recovers haplotypes that recapitulate the phylogeny of the true viral population even when haplotypes are closely related to one another, as evidenced by the shorter branch lengths in D7 and D10 (Figure S3). Thus, even when MLEHaplo overestimates the number of actual haplotypes in a population, the phylogenetic information for the true set is retained.

\subsubsection{Effect of variation in coverage}
We assess the performance of ViPRA and MLEHaplo at varying sequencing coverages, as the recovery of low frequency variants may be affected in the presence of dominant viral strains in a population. Error-free Illumina sequencing paired reads of length 150 bp each with an average insert size of 430 bp (standard deviation 75 bp) are simulated from datasets D6 and D10 at coverages of 25x, 50x, 100x, 200x, and 400x. The insert size chosen here is close to the average insert size for newer Illumina sequencing technologies. We compare datasets D6 and D10 because they represent viral populations containing haplotypes with short leaf branch lengths and long leaf branch lengths and have a 1000-fold difference in the total number of \textit{source-sink} paths (Table~\ref{Tab:totalpaths}, Figure S2, Figure (A), (B) in S6). Thus viral haplotypes in dataset D6 are more closely related than those in dataset D10. At each sequencing depth, we simulated five replicates of paired end Illumina sequencing from D6 and D10 and report the average performance over these replicates. 

As the performance of ViPRA at low sequencing coverage is unknown, the parameter $M$ is relaxed from its sufficient value of 10 demonstrated above for D1-D10 and set to $M=35$. For the more diverse viral population D10, ViPRA reconstructs 344 paths and this number is independent of sequencing coverage. The recall of true haplotypes is 100\% in D10 at all coverages, indicating that even 25x coverage is sufficient for ViPRA to reconstruct all the true haplotypes in diverse viral populations (Fig. 2). In low complexity dataset D6, the number of reconstructed paths decreases as coverage increases, and converges to around 200 at coverages greater than 100x (Fig. 2). The increase in number of paths reconstructed by ViPRA at coverages less than 100x is due to reconstruction of both partial length and full length paths in D6 (Figure S4). For low diversity dataset D6, the median recall is close to 90\% at 25x coverage and increases to a median value of 100\% at 200x and 400x coverage. 

MLEHaplo generates the set $\mathbf{H_{ml}}$ containing a median value of around 20 haplotypes for dataset D10 at all sequencing coverages, while the number for D6 decreases from 41 at 25x coverage to 26 at 400x sequencing coverage (Fig. 2), indicating that performance in low diversity populations (similar to dataset D6) improves with increasing coverage. The decrease in number of paths at convergence in D6 can be attributed to the increase in support for \textit{source-sink} paths in the set $PS$. The set $\mathbf{H_{ml}}$ does over-estimate the number of true viral haplotypes by $3-6$ times the size of the viral population, although the over-estimation decreases for D6 with increasing coverage. The median value of recall of the true haplotypes increases from 45\% at 25x coverage to 100\% at coverages greater than 200x in dataset D6. For dataset D10, MLEHaplo has more than 90\% recall at each coverage (Fig. 2). Thus, 200x coverage is sufficient to achieve a median value of recall of 100\% in both datasets. 

\subsection{Comparison to existing methods}
We compare the haplotypes predicted by MLEHaplo on datasets D1-D10 (Results in Table~\ref{Tab:02}) to those obtained from the softwares ShoRAH \cite{shorah}, QuasiRecomb \cite{QuasiRecomb}, and PredictHaplo \cite{Prabhakaran}. The consensus sequence obtained by the \textit{de novo} assembler Vicuna  \cite{yang2012novo} was used as the reference sequence for each method as they require a reference sequence. Through personal communication, the corresponding authors of the softwares VGA \cite{vga} and HaploClique \cite{HaploClique} indicated that their softwares were not being actively developed, thus were not included in comparison. ShoRAH and QuasiRecomb over-estimate the number of predicted haplotypes, while PredictHaplo under-estimates the number of viral haplotypes in eight out of ten datasets (Table~\ref{Tab:02}). QuasiRecomb predicts greater than 1900 haplotypes for all datasets that only contain seven true haplotypes, although a large number of haplotypes are reported as low frequency variants that have relative population frequencies less than $5\cdot10^{-4}$. None of these methods have 100\% recall for any of the ten data sets and all fail to recover any true haplotypes in at least one of the datasets. In contrast, MLEHaplo predicts 6-10 haplotypes for all the ten datasets and correctly reconstructs $6-7$ out of seven true haplotypes in all the ten datasets. Thus, MLEHaplo retains the correct sequences of the haplotypes while accurately predicting the number of haplotypes in each viral population.

\subsection{Performance on HCV populations under the presence of sequencing errors}
Our simulated data were generated under a realistic evolutionary model but fail to capture the constraints on viral sequences that results in a non-random distribution of substitutions in the viral population. Keeping this in mind, we assess the performance of ViPRA and MLEHaplo on HCV populations. The E1/E2 genes (length 1672 bp) from ten HCV strains observed in patient C from a previous study are used as our viral population \cite{hussein2014new}, where the distribution of conserved and variable sites in the E1/E2 genes is not uniform across the length of the genome (Figure S5). The strains are named 1C-10C, out of which strains 2C and 9C are distantly related to the other eight strains (Figure S6 (C)). Strains 3C, 4C, and 10C are either identical or differ from each other by 1 bp and have near-zero branch length when constructing a neighbor joining phylogenetic tree of the ten strains (Figures 3(A) and 3(B)). Thus, all methods are assessed for the recovery of eight E1/E2 genes in the HCV strains. 

A dataset HCV-U is generated by simulating paired reads from all the ten strains at equal frequencies using the software SimSeq \cite{earl2011assemblathon}. Another dataset is generated from the same HCV strains where the relative abundances follows power law with factor 2, denoted as dataset HCV-P. The average sequencing coverage is around 500x with insert sizes 300 bp and read lengths 150 bps. The reads in the datasets HCV-U and HCV-P also include sequencing errors to determine how it affects the performance of ViPRA and MLEHaplo. The software BLESS \cite{bless} was used for error correction and the error corrected reads are represented in the De Bruijn graph. In addition to the error correction from BLESS, the \textit{source-sink} paths in the graph that terminate in vertices with \kmer counts less than the threshold in BLESS (tips in the De Bruijn graph \cite{spades}) are ignored when paths are reconstructed by ViPRA. 

As sequencing errors inflate the number of haplotypes generated by each method, we report the number of distinct paths generated by a method that are different from each other by more than 10 bp, or greater than 99\% sequence difference. All methods except PredictHaplo overestimate the number of haplotypes in the viral population (numbers in brackets Table~\ref{Tab:03}). MLEHaplo and ShoRAH predict similar number of distinct paths, while QuasiRecomb overestimates the distinct paths in the HCV-U dataset. The paths generated by MLEHaplo cluster well to generate a small number of distinct paths for both the HCV-U and HCV-P datasets (Figure S7). The distribution of all pairwise distances for the predicted haplotypes indicates that while MLEHaplo, PredictHaplo, and ShoRAH have similar distributions to the true HCV strains, all of the paths generated by QuasiRecomb have small pairwise distances and the distribution of pairwise distances differs from that of the true HCV strains (Figure S8), which leads to small number of distinct paths in QuasiRecomb. We also compute the recall of the eight true E1/E2 HCV strains by each method to within 10 bp of sequencing difference. MLEHaplo has the highest recall amongst all methods that does not change on the HCV-U and HCV-P datasets, while the recall of other methods decreases in the HCV-P dataset (Table~\ref{Tab:03}). All methods reduce the length of the reconstructed paths, as the terminal ends of the strains have low sequencing coverage and are not reconstructed accurately. 

We assess the quality of the reconstructed haplotypes from each method by constructing a phylogeny of the reconstructed and true haplotypes (Figure~\ref{fig:03}). As mentioned above, the reconstructions of strains 3C, 4C, and 10C are considered together as they differ by 1 bp or less. In the HCV-U dataset, MLEHaplo reconstructs a haplotype within 10 bp sequence difference for six of the seven distinct strains and also one haplotype for the strains 3C, 4C, and 10C. Other methods fail to reconstruct haplotypes for one or the other strain in both HCV-U and HCV-P datasets, with PredictHaplo and QuasiRecomb only recovering three out of the eight strains in the HCV-P dataset. The haplotype 6C is recovered by all methods in both HCV-U and HCV-P datasets. All strains except 8C are reconstructed by MLEHaplo in the HCV-U and HCV-P datasets. For 8C strain, MLEHaplo predicts a haplotype that differs from the true strain by 15 bp in both the HCV-U and HCV-P datasets. The performance of MLEHaplo remains relatively unchanged under the power law distribution of strains in the HCV-P datasets, whereas other methods recover fewer haplotypes. Thus, MLEHaplo recovers the full phylogeny of the viral population as before under the uniform and power law distributions of viral strains. 

\begin{figure*}
\centering
\includegraphics[width=5in]{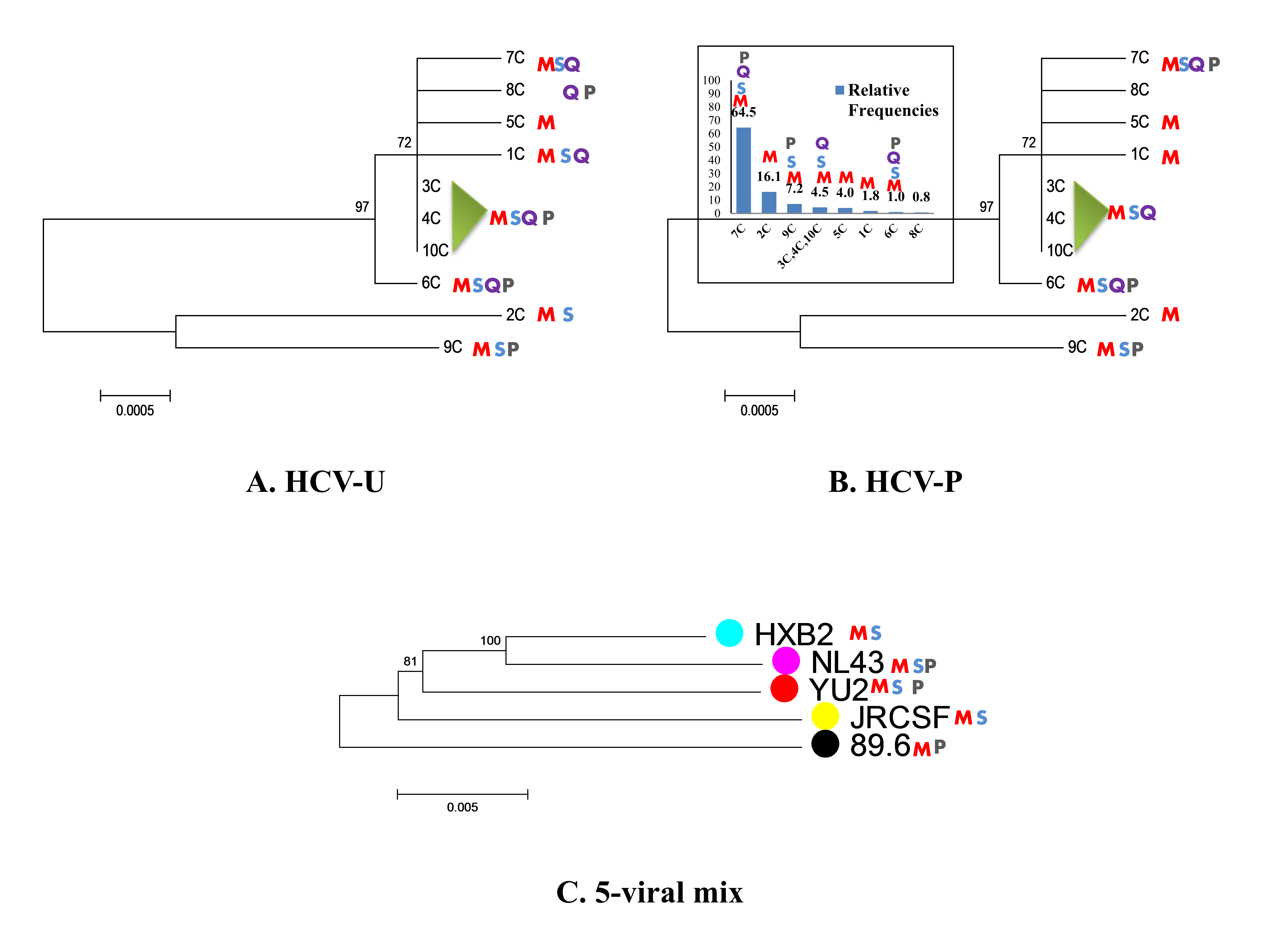}
\caption{ \textbf{Phylogenentic relationships of the reconstructed haplotypes on simulated reads from HCV E1/E2 genes and env gene from a set of 5 HIV-1 strains.} Bootstrap neighbor joining trees are shown for the true haplotypes in HCV population and the set of env genes from 5 HIV-1 strains. The reconstructed haplotypes from MLEHaplo (red M), ShoRAH (blue S), QuasiRecomb (purple Q) and PredictHaplo (dark grey P) that have minimum sequence difference to the true haplotypes in a viral population are indicated next to it. On datasets (A) HCV-U and (B) HCV-P, reconstructed haplotypes that are within 10 bp of the true sequence are reported. Inset shows the relative distributions of haplotypes in the power distributions and the reconstrutions of different methods are replicated on top of the true strains. Reconstruction for strains 3C, 4C, and 10C is considered together as they differ by one bp or are identical. (C) For the five HIV-1 strains, the neighbor joining phylogeny is shown along with the env-gene of an outgroup HIV-1 subtype A strain. MLEHaplo reconstructs a haplotype with greater than 98\% sequence identity to the known consensus sequences for all five strains, while reconstructed haplotypes by ShoRAH and PredictHaplo are also reported. QuasiRecomb ran out of memory on five viral mix dataset and was not evaluated. \label{fig:03}}
\end{figure*}

\subsection{Evaluation on real sequencing data}
We next evaluate the performance of ViPRA and MLEHaplo on a real sequencing data, where there are additional sources of errors during sample preparation and reverse transcription of viruses. A laboratory mixture of five known HIV-1 strains was sequenced using three different sequencing technologies in a previous study \cite{fiveviralmix}, where the consensus sequence of each strain was also obtained (see details in \cite{fiveviralmix}). The strains were named as YU2, NL43, HXB2, JRCSF, and 89.6. In their study, the HIV-1 strains were broken into five overlapping segments of average length 2500 bps that were separately shotgun sequenced using Illumina paired end sequencing. We used Illumina paired end sequencing from the last segment corresponding to the env gene for evaluating our method. 

As the Illumina sequencing data for the five HIV-1 strains was obtained by transfection of 293T cells followed by reverse transcription of the RNA viruses \cite{fiveviralmix}, we expect to see a large number of sequencing differences between the consensus sequences of the viral strains and the reconstructed viruses due to the inherent replication errors of the reverse transcription step. ViPRA generates 262 paths of average lengths 1500 bp from the error corrected env region reads and MLEHaplo reduces it to 78 paths. These paths cover the 2300 bp of the envelope gene in two overlapping 1500 bp segments that can be joined together to reconstruct the full sequence of the envelope gene, as is seen in the tablet software alignments of the paths to the five HIV-1 strains \cite{tablet} (Figure S9). 

We measure whether the reconstructed haplotypes are phylogenetically related to the five HIV-1 strains. We use a known HIV-1 strain subtype A (Accession number AB253635) as an outgroup to generate a rooted neighbor joining phylogeny of the five HIV-1 strains and report the sequence the reconstructed haplotypes that are greater than 97\% of sequencing difference to the consensus sequences of HIV-1 strains (Figure~\ref{fig:03}). As the sequenced reads are generated from replicating viruses, none of the methods reconstruct haplotypes with 100\% sequence identity to env genes. MLEHaplo reconstructs paths for the strains HXB2 and 89.6 that have greater than 99\% sequence identity to the consensus sequences. The strains NL43, YU2, and JRCSF are reconstructed with 97\% sequence identity to their  respective consensus sequences. While ShoRAH reconstructs haplotypes for four strains with 97\% sequence identity, PredictHaplo reconstructs haplotypes for three strains. The method QuasiRecomb ran out of memory on multiple trials and was not evaluated. It should be noted that the reference sequence used for the reference-based methods was generated using the assembler for diverse viral populations Vicuna \cite{yang2012novo}. Thus, while other methods fail to reconstruct all the strains, MLEHaplo generates haplotypes for all the HIV-1 strains. 

We also evaluate whether the reconstructed haplotypes are representative of the sequenced reads by mapping all the \kmers from the reads to the reconstructed haplotypes. The reconstructed haplotypes by ViPRA account for more than 98.46\% of the error corrected \kmers and the paths generated by MLEHaplo represent 97.48\% of the envelope region \kmers. On the other hand, only 95.56\% of the envelope region \kmers are explained by aligning them to the known consensus sequences of the five HIV-1 strains, indicating that the haplotypes reconstructed by MLEHaplo are able to capture more variation observed in the reads than alignment to consensus sequences while capturing the true phylogeny of the five HIV-1 strains. 

\section{Discussion}
\label{conclusions}
We have proposed MLEHaplo, a maximum likelihood \textit{de novo} assembly algorithm for viral haplotypes using paired-end NGS data. The paired reads are represented in a De Bruijn graph and the pairing information of the reads is stored as pairs of \kmers in the set $PS$. The support found in the set $PS$ for pairs of vertices is used to score \textit{source-sink} paths, and the scoring mechanism ensures that the paths represent possible viral haplotypes. As reconstructing a minimal cover of the graph under paired constraints is NP-hard, we have proposed a polynomial time heuristic algorithm, ViPRA, that recovers a small fraction of the total number of paths in the graph. MLEHaplo then reconstructs a maximum likelihood estimate of the viral population using the paths recovered by ViPRA based on a generative model for sampling paired reads from the viral population. MLEHaplo takes about a day and half of running time on a single core machine to process half a million reads, and was comparable in run-time to other methods. 

We have tested ViPRA and MLEHaplo on simulated viral populations that consist of haplotypes arising from common ancestors based on a coalescent model. These simulations are more realistic models of viral populations rather than introducing random mutations uniformly across a known virus. MLEHaplo predicts the smallest set of viral haplotypes and has the highest recall for the true haplotypes when compared to three existing reference based reconstruction methods. 

We also evaluated ViPRA and MLEHaplo on reads simulated from HCV E1/E2 genes identified in an infected patient and on a dataset of five HIV-1 strains which has been used to test haplotype reconstruction methods. MLEHaplo reconstructs haplotypes with greater than 99\% sequence identity to the true haplotypes for HCV E1/E2 genes and greater than 97\% sequence identity for the five HIV-1 strains. The decrease in sequence identity in the case of real HIV-1 strains is understandable as the reads are generated from replicating viruses where one would observe additional sequence variation with respect to the reference. This has been documented as single nucleotide polymorphisms with respect to the reference in the original study \cite{fiveviralmix} and as an increase in sequence alignment score in a previous method \cite{viquas}. Nevertheless, the haplotypes reconstructed by MLEHaplo better explained the observed error corrected sequenced data than the consensus sequences of the known strains, suggesting they form a better representation of the sequenced data. Overall, MLEHaplo is able to generate haplotypes that capture the true phylogeny of the sequenced data. 

The usage of De Bruijn graphs for viral haplotype reconstruction has a number of advantages: The \kmers as vertices avoids the costly computations of reads overlaps. Also, because the De Bruijn graph construction is \textit{de novo}, it contains all the variation observed in the viral population which can be lost by aligning reads to a reference genome. Choosing $k$ greater than $D$, the size of the largest repeat in the viral genome, ensures that the De Bruijn graph obtained is acyclic. As repeats in RNA viruses are generally short, acyclic De Bruijn graphs can be readily obtained for viral populations. However, the De Bruijn graph is sensitive to the presence of errors, and an efficient error correction algorithm is essential. Error correction using the error correction software BLESS \cite{bless} in addition to removal of \textit{source-sink} paths that had low coverage at their ends aided in the reconstruction of full length paths from the algorithm ViPRA. The concept of removing low coverage paths from the graph has been used earlier in whole genome assembler SPADES \cite{spades} where tips and bubbles are collapsed using similar techniques. 

The set $PS$ stores the \kmer pairs observed in paired reads and it can also combine pairing information from multiple overlapping paired reads that are within the insert size $IS$. Efficient storage and computation are possible using binary representation for \kmers and parsing the reads in multiple iterations \cite{dsk,kmc2}. Additionally, this computation needs to be only performed once for a given sequencing dataset. The usage of pairing information for reconstruction of paths from De Bruijn graphs has been shown to improve performance for single genome assembly \cite{spades,paireddbg}. We have used a simplistic version where the number of occurrences of a pair of \kmers is the only information required for the path scoring. The presence of variable insert size mate-pair data or longer length reads can provide additional support for \kmer pairs in the paired set and further improve the performance of ViPRA. 

The proposed scoring mechanism for paths in the De Bruijn graph ensures that paths corresponding to the true viral haplotypes have a high score. Under the assumption of \textit{sufficient} coverage across a viral haplotype, a path in the De Bruijn graph corresponding to a true viral haplotype should have high paired \kmer support within the insert size $IS$ in set $PS$ and thus will have a high score. Penalizing the score for \kmer pairs that are missing only within the insert size length $IS$ is essential in discarding false positive paths, as the absence of such \kmer pairs has a low probability under \textit{sufficient} coverage. On the other hand, we do not expect to see \kmer pairs that are at distances greater than insert size due to our sequencing process, and thus such \kmer pairs are not penalized. This ensures that when ViPRA retains the $M$ top scoring sub-paths at a vertex these sub-paths correspond to segments of the true haplotypes and as ViPRA propagates to the \textit{source} vertices of the graph, ViPRA reconstructs paths corresponding to the true haplotypes. 

The number of paths generated by ViPRA is dependent on the choice of the parameter $M$. The choice of $M$ is dependent on the inherent diversity of the viral population, which can be inferred by the number of vertices in the De Bruijn graph. The relative ratio of the counts of occurrences of most frequent \kmers to the average sequencing coverage can be used as a lower bound for determining $M$. As the most frequent \kmers are likely to be shared amongst the viral haplotypes in the population, such an $M$ is the minimum that is required for ViPRA to reconstruct all viral haplotypes that share this \kmer's vertex. We experimentally determined that $M=10$ was sufficient for reconstructing all the true viral haplotypes in the simulated datasets and real datasets, even in the presence of sequencing errors. 

MLEHaplo generates a maximum likelihood set of paths using the paths generated by ViPRA as a starting point. As the number of viral haplotypes in the population is unknown, MLEHaplo uses a backward elimination optimization that iteratively removes a path from the set of paths reconstructed by ViPRA such that the likelihood of the sampled paired reads under the remaining set of paths increases. Thus, MLEHaplo further reduces the number of paths reconstructed from the graph thereby reducing the false positive paths at convergence. The advantage of MLEHaplo is that all the sampled paired reads are explained by one of the reconstructed paths at convergence. However, it is sensitive to the paths reconstructed by ViPRA and can only improve on this set of paths. Additionally, as a path removed by backward elimination once is never considered again, it leads to the retention of haplotypes that are close to the true sequences but do not affect the likelihood. Removing one path iteratively makes MLEHaplo a time consuming step, and parallelization of the code can improve the speed of computation. An advantage of MLEHaplo and ViPRA over the existing methods is that they retain the correct phylogeny of the true viral haplotypes, even when the reconstructed paths are not an exact match to the true viral haplotypes. Thus, the viral diversity can be correctly inferred from the reconstructed paths. The software implementation for MLEHaplo is available at \url{https://github.com/raunaq-m/MLEHaplo}.

\section{SUPPLEMENTARY FIGURES}
\subsection*{S1 Figure}
\label{fig:01} 
\textbf{Likelihoods of a set of haplotypes reconstructed by MLEHaplo. }
x-axis is the number of haplotypes present in the predicted set, y-axis shows the maximum likelihood for any combination of predicted haplotypes of the size mentioned on x-axis. The vertical bars denote 95\% confidence intervals. The likelihood drops significantly when \kmer-pairs that are present in the reads are not explained by a current size of haplotypes. Inset shows where the maximum likelihood was achieved in datasets D1 and D8. 

\subsection*{S2 Figure}
\label{fig:phylogenytrees}
\textbf{Phylogenetic tree of true and predicted haplotypes for datasets D1, D2, D6, and D8.} 
Bootstrap neighbor joining tree for the predicted (green) and true haplotypes (red) for datasets D1, D2, D6, and D8. Solid triangles indicate perfect sequence match between a reconstructed path and a true haplotype that clusters with it, while hollow green squares indicate predicted haplotypes that have mismatches to the true haplotypes (hollow red circles). Except for D1, D2, D6, and D8, all datasets have a reconstructed path exactly matching each of the seven true haplotypes. D6 has more than one reconstructed paths for three out of seven true haplotypes, nevertheless, they all group together in the neighbor joining tree.

\subsection*{S3 Figure}
\label{fig:restofsixphylogeny}
\textbf{Phylogenetic tree of true and predicted haplotypes for datasets D3-D5, D7, and D9-D10. } 
Bootstrap neighbor joining trees for the predicted (green) and true haplotypes (red) for datasets D3, D4, D5, D7, D9 and D10. Solid triangles indicate perfect sequence match between the reconstructed path and a true haplotype that clusters with it. All six datasets have perfect reconstruction.

\subsection*{S4 Figure}
\label{fig:Length12kb}
\textbf{ViPRA: Effect of variation in coverage in length of recovered paths for small genome size viral populations.}
Boxplot of length of paths recovered  for datasets D6 (first column) and D10 (second column) at coverages of 25x, 50x, 100x, 200x, and 400x. The length of paths are normalized genome size (1200bp). Partial length paths are recovered at low coverages. Full length paths are recovered at coverages greater than 100x. 

\subsection*{S5 Figure}
\label{fig:distributionofmutations}
\textbf{Distribution of variable and conserved sites for simulated viral population and HCV strains.}  Figure shows the distribution of variable sites (black lines) and conserved sites (gray background) across an alignment of 7 haplotypes in D1 dataset (top part), and across an alignment of ten E1/E2 HCV strains. The distribution for D1 is uniform, while it is highly non-uniform for HCV strains. 

\subsection*{S6 Figure}
\label{fig:distancedistribution}
\textbf{Distance matrix for the simulated viral populations and HCV E1/E2 strains.} Figure shows the sequence distance distribution of the viral haplotypes used in simulated datasets under the coalescent model and the ten haplotypes in the HCV viral population. (A) Dataset D6, (B) Dataset D10, and (C) HCV dataset. 

\subsection*{S7 Figure}
\label{hcvmlehaploheatmaps}
\textbf{Pairwise distance of the predicted haplotypes from MLEHaplo for E1/E2 HCV strains.}  The pairwise distance heatmaps for the predicted haplotypes by MLEHaplo are shown for (A) 43 haplotypes reconstructed in HCV-U dataset, and (B) 81 haplotypes reconstructed in HCV-P dataset. The blocks observed in the heatmaps indicate predicted haplotypes that have similar to each other. The colorbar indicates the scale for the pairwise distances. The haplotypes are clustered into 27 distinct paths in HCV-U and 28 distinct paths in HCV-P dataset. 

\subsection*{S8 Figure}
\label{sequencedistributions}
\textbf{Comparison of pairwise distances for reconstructed haplotypes for E1/E2 HCV strains.} (A) The histograms for the pairwise distances between the 10 true E1/E2 HCV strains is shown. (B) The histogram of the pairwise distances of the reconstructed haplotypes by each method in the HCV-U dataset (left column) and HCV-P dataset (right column) are shown. Haplotypes reconstructed by QuasiRecomb have low pairwise distances compared to the true haplotypes, while MLEHaplo, ShoRAH resemble the distribution for the true haplotypes. 

\subsection*{S9 Figure}
\label{mappingfiveviralmix}
\textbf{Alignment of reconstructed haplotypes from MLEHaplo for five HIV clones dataset.} The reconstructed haplotypes are 1500 bp long, but cover the envelope gene of all the five HIV strains with overlapping segments, which can be easily joined together.

\section{Tables}
\begin{table}[!ht]
\centering
\caption{ \bf{Comparison of total number of paths to paths generated by ViPRA at $M=10$}}
\label{Tab:totalpaths}
\begin{threeparttable}
\begin{tabular}{|c|c|c|c|}\hline
Dataset & \MyHead{2cm}{Total paths in the graph} & \MyHead{1.5cm}{ \# of paths from ViPRA \tnote{a,b}} & \MyHead{1.5cm}{MLEHaplo: \# of paths in $\mathbf{H_{ml}}$ \tnote{b}}  \\ \hline
D1	&	4824	&	50(7)	&	7(6)\\
D2	&	44299	&	38(7)	&	8(6)\\
D3	&	325585	&	42(7)	&	7(7)\\
D4	&	164387	&	52(7)	&	7(7)\\
D5	&	6768	&	73(7)	&	7(7)\\
D6	&	1665626	&	32(7)	&	10(6)\\
D7	&	2423	&	66(7)	&	7(7)\\
D8	&	8712	&	74(7)	&	8(6)\\
D9	&	4895	&	75(7)	&	7(7)\\
D10	&	1357	&	85(7)	&	7(7)\\ \hline
\end{tabular}
\begin{tablenotes}
\item[a] The score of each path $P$ is non-negative.
\item[b] Number in bracket indicates the \# of true haplotypes in the population that are retained by ViPRA or MLEHaplo with an exact sequence match. There are seven true viral haplotypes in each dataset. 
\end{tablenotes}
\end{threeparttable}
\end{table}

\begin{table*}[t]
\centering
\caption{ \bf{Comparison of results on D1-D10 to existing methods}}
\label{Tab:02}
\begin{threeparttable}
\begin{tabular}{|c|c|c|c|c|}\hline
Dataset & MLEHaplo  &  ShoRAH \tnote{a} & QuasiRecomb \tnote{a} & PredictHaplo \tnote{a} \\ \hline
D1 & 	 	7(7)		&	36(3)		&	6837 (2) & 	4(0) 	\\
D2 & 		8(6)		&	56(1)		&	9864  (0)	& 	\textendash		\\
D3 & 	 	7(7)		&	49(1)		&	1964   (5)	&	5(4)		\\
D4 & 	 	7(7)		&	39(0)		&	8582 (0)	&	6(1)		\\
D5 & 	 	7(7)		&	\textendash 	&	9988 (0)	&	3(0)		\\
D6 & 	 	10(6)	&	17(4)		&	4265  (3)	&	7(2)		\\
D7 & 	 	7(7)		&	8(2)			&	7161 (1)	&	7(6)		\\
D8 & 		8(6)		&	29(1)		&	9943 (1)	&	3(0)		\\
D9 & 		7(7)		&	14(1)		&	8386 (2)	&	5(1)		\\
D10 & 	 	7(7)		&	19(1)		&	9239 (0)	&	4(0)		\\
\hline
\end{tabular}
\begin{tablenotes} 
\item[*] The number in the bracket indicates the number of true haplotypes that are present in the predicted set with an exact match. 
\item[a] Consensus sequence generated by VICUNA was used as reference sequence for reference based methods SHoRAH, QuasiRecomb, and PredictHaplo. 

\end{tablenotes}
\end{threeparttable}
\end{table*}

\begin{table*}[t]
\centering
\caption{\bf{Comparison of number of paths generated on HCV populations.}}
\label{Tab:03} 
\begin{threeparttable}
\begin{tabular}{|c|c|c|c|c|}\hline
Dataset	&	MLEHaplo	&	ShoRAH 	&	QuasiRecomb 	&	PredictHaplo \\ \hline
HCV-U	&	27(43)	&	27(280)	&	45(1024)	&	7(7)	\\
HCV-U(Recall) \tnote{a} & 87.5\%& 75\% & 62.5\% & 50\% \\ \hline
HCV-P	&	28(81)	&	18(41)	&	8(493)	&	7(8)	\\  
HCV-P(Recall) \tnote{a} & 87.5\% & 50\% & 37.5\% & 37.5\% \\

\hline
\end{tabular}
\begin{tablenotes} 
\item[*] The number in the bracket indicates the \# of paths predicted by a method, while the value outside denotes the number of clusters of paths that are differ by greater than 99\% sequence difference. 
\item[a] Recall is the fraction of the true E1/E2 HCV strains (out of 8) that are recovered with more than 99\% sequence identity (less than 10 bp difference) by a method.  
\end{tablenotes}
\end{threeparttable}
\end{table*}




\appendices
\section{Rationale that scoring mechanism selects paths that correspond to viral haplotypes in the population}

The \kmer pairs in the set $PS$ is a summarization of the observed paired reads and as the scoring of paths is based on the set $PS$, it is useful in selecting \textit{source-sink} paths that represent the viral population. As defined in the text, the score $S(P)$ of a path $P$ is defined as : 

\begin{multline}
\label{eq:06}
S(P) = \frac{1}{E(P)}\cdot \\ 
\sum_{(r,s) \in P \cap [d(s)-d(r)<IS]}{ \{ \mathbbm{1}{[(r,s) \in PS}]  - pen \cdot \mathbbm{1}{[(r,s) \notin PS}] \} }
\end{multline}

We provide a rationale for the paths corresponding to true viral haplotypes to have a high score $S(P)$. 

\begin{definition}[Sufficient Coverage]
The sampled paired reads are defined to have \textit{sufficient} coverage of the viral population $\mathbf{H}$ if there exists paired reads $(R_f,R_r)$ that sample every haplotype $H \in \mathbf{H}$ and there exists a paired read $(R^{\prime}_{f},R^{\prime}_{r})$ that samples a pair of \kmers $(u_i,u_j) \in H$ with $d(u_i,u_j)<IS$.
\end{definition}

\noindent Given \textit{sufficient} coverage and the definition of the paired set $PS$, if two vertices $u_i$ and $u_j$ are present in a paired read, then a path $P$ that contains both vertices $u_i$ and $u_j$ can be a possible viral haplotype. Using \textit{sufficient} coverage, we can show that paths corresponding to the true viral haplotypes in $\mathbf{H}$ have a path score $S(P) = 1$. Thus in order to extract paths from the graph $G$, it is sufficient to focus on the paths with high scores. 

\begin{thm}
\label{thm:01}
Given \textit{sufficient} coverage of the viral population $\mathbf{H}$, for a path $P$ in the graph $G$ that corresponds to a viral haplotype in $\mathbf{H}$, the score $S(P) = 1$. 
\end{thm}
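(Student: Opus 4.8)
The plan is to argue that, on a path spelling a true haplotype, \emph{sufficient} coverage forces every within-$IS$ vertex pair into the set $PS$; the penalty term then vanishes, the numerator of $S(P)$ reduces to a bare count of such pairs, and this count is by construction $E(P)$, so the ratio equals $1$.

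First I would fix a \textit{source-sink} path $P=(u_1,\ldots,u_r)$ that spells a haplotype $H\in\mathbf{H}$ and take an arbitrary ordered vertex pair $(r,s)$ with $r,s\in P$ and $d(s)-d(r)<IS$. Because $P$ spells $H$, these two \kmers occur in $H$ at distance less than $IS$, so the sufficient-coverage hypothesis supplies a paired read sampling $H$ in which both \kmers appear together. By the definition of the paired set, every \kmer pair observed in a read is recorded in $PS$; hence $(r,s)\in PS$, which gives $\mathbbm{1}[(r,s)\in PS]=1$ and $\mathbbm{1}[(r,s)\notin PS]=0$.

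Next I would substitute these indicator values into equation \ref{eq:06}. Since the penalty indicator is zero for every pair in the range of summation, the term $-pen\cdot\mathbbm{1}[(r,s)\notin PS]$ contributes nothing, and each remaining summand equals $1$. The numerator therefore equals the number of ordered vertex pairs $(r,s)$ of $P$ satisfying $d(s)-d(r)<IS$, i.e.\ the count of within-$IS$ pairs along $P$. Because $E(P)$ is defined precisely as this expected number of within-$IS$ vertex pairs on $P$, and sufficient coverage guarantees each is realized, the numerator equals $E(P)$, yielding $S(P)=E(P)/E(P)=1$.

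The step I expect to carry the real weight is the passage from the existential wording of the sufficient-coverage definition---that \emph{there exists} a paired read sampling a pair of \kmers $(u_i,u_j)$ with $d(u_i,u_j)<IS$---to the universal statement I actually use, namely that \emph{every} within-$IS$ pair along $P$ is witnessed by some read. I would make explicit that this is the intended reading (a witnessing read for each such pair, ranging over all pairs on $H$), and I would separately confirm that $E(P)$ tallies exactly the same set of pairs, so that no mismatch in normalization survives and the cancellation to $1$ is exact.
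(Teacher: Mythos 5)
Your argument is correct and follows essentially the same route as the paper's proof: sufficient coverage places every within-$IS$ vertex pair of the haplotype path into $PS$, the penalty indicator vanishes, the sum reduces to the count of within-$IS$ pairs, and normalization by $E(P)$ gives $1$; you are in fact more careful than the paper in flagging that the existentially-worded coverage definition must be read as supplying a witnessing read for \emph{every} such pair. The only difference is that the paper prepends a short first part establishing that a path spelling each haplotype exists in $G$ at all, which you take as given from the theorem's hypothesis --- a harmless omission since the statement quantifies over paths that already correspond to haplotypes.
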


\begin{proof}
The proof can be broken into two parts: 
\begin{itemize}
\item[1.] For every viral haplotype $H_i \in \mathbf{H}$, there exists a path $P$ in the graph $G$, and 
\item[2.] The score $S(P)$ for such paths is 1. 
\end{itemize}

As \textit{sufficient} coverage implies that all the viral haplotypes $H_i \in \mathbf{H}$ are sampled by the paired reads, for a given viral haplotype $H$, it follows that there exists vertices $u_j$ in the graph $G$ that are sampled from the viral haplotype $H$, which implies that a path $P= (u_{1}, u_{2}, \ldots, u_{m})$ , where $u_i \in H$, exists in the graph. 

The score $S(P)$ for such a path, by definition in equation \ref{eq:06}, is the summation over all pairs of vertices $(u_i,u_j) \in P$ that are within the distance $IS$. Again, by the definition of \textit{sufficient} coverage, all pairs of vertices from a viral haplotype $H\in \mathbf{H}$ within distance $IS$ are sampled by some paired read, which implies that $(u_i,u_j) \in PS$. Thus the score $S(P)$ is :

\begin{multline}
S(P) = \frac{1}{E(P)} \sum_{(r,s) \in P \cap d(s,r) < IS} { \mathbbm{1} [(r,s) \in PS] - pen \cdot \mathbbm{1} [(r,s) \notin PS] } \\
 = \frac{1} {E(P)} \sum_{(r,s) \in P \cap d(s,r) < IS} {\mathbbm{1} [(r,s) \in PS]}  = \frac{1} {E(P)} \cdot E(P) = 1
\end{multline}

\end{proof}

\section{Viral Path Reconstruction Algorithm (ViPRA)}
Algorithm \ref{alg:01} describes the pseudo-code of ViPRA that reconstructs the top $M$ paths through every vertex in the graph. ViPRA starts by initializing the paths from all vertices $u$ to the \textit{sink} vertex $u_{sink}$ to empty sets. The scores for all paths from each vertex are also initialized to empty sets (Lines 1-4). Next it iterates over all vertices $u \in V_c$ in increasing order of their distance to the sink vertex ($d(u,u_{sink}))$ to compute the top $M-$ paths through each vertex $u$ using the function TOP-M-PATHS-FOR-VERTEX$(u, E_c,PS)$ (Lines 5-10). When a graph has multiple \textit{sink} vertices, a universal \textit{sink} vertex is defined that has an edge from each of the \textit{sink} vertex to it. 

\begin{algorithm}[!htb]
\caption{\label{alg:01}ViPRA() : Top $M$ paths per vertex based on set PS and the graph $G$}
\textbf{Input:}  Directed De Bruijn graph $G(V,E)$, Set $PS$, d(.) the distance between vertex pairs in $G$.\\
\textbf{Output:} $Paths(u) =\{ P_{u1}, P_{u2}, \ldots, P_{uM} \}\mbox{ } \forall u \in V$ \\
	$Score(u) = \{ S_{u1}, S_{u2}, \ldots, S_{uM} \} \mbox{ } \forall u \in V$ \\
\begin{algorithmic}[1]
\FOR { each vertex $u \in V$}
\STATE $Paths(u) =[\emptyset ] $ //Initialize the paths for each vertex
\STATE $Score(u) = [\emptyset ]$
\ENDFOR
\STATE $Score(u_{sink}) = \{0 \}$
\STATE $Paths(u_{sink}) = \{ ``\_'' \} $ // Place holder between vertices
\STATE $N = $ SORT-INCREASING$(V, D(.))$ // Sort vertices in increasing order of distance to the \textit{sink} vertex and store it in $N$
\FOR  { $ i=1, \cdots, |N|$}
\STATE $[Paths(N[i], Score(N[i])) ] = $TOP-M-PATHS-FOR-VERTEX$ (N[i], E, PS) $
\ENDFOR
\end{algorithmic}
\end{algorithm}

\begin{algorithm}[!htb]
\caption{\label{alg:02} TOP-M-PATHS-FOR-VERTEX $(u, E,PS)$ : Top $M-$paths for a vertex $u$ in the graph $G$ }
\textbf{Input:} Vertex $u$, Edge set $E$, Set $PS$, Insert size  $IS$\\
\textbf{Output:} $Paths(u), Score(u)$, Set of top  $M-$paths for the vertex $u$, and their respective scores
\begin{algorithmic}[1]
\STATE $TP = [ \emptyset ]$
\STATE $SP = [ \emptyset ]$
\FOR {each $\{v,  (u,v) \in E\}$}
\STATE $TP =$  JOIN $( TP , Paths(v)) $ //Obtain top $M$ -paths of the neighbor
\STATE $SP = $ JOIN $(SP Score(v) )$ 
\ENDFOR
\FOR {each path $T \in TP$}
	\STATE $l = \mbox{length}(T)$
	\STATE $SP(T) = SP(T) \cdot \frac{l\cdot(l-1)}{2} $
	\FOR {each vertex $w \in T$}
	\IF { $(u,w) \in PS$ }
	\STATE $SP(T) = SP(T) +1$ 
	\ELSIF {$D(w) - D(u) > IS^{*}$}
	\STATE $SP(T) = SP(T) + 1$
	\ELSE
	\STATE $SP(T) = SP(T) - l$
	\ENDIF
	\ENDFOR
	\STATE $SP(T) = SP(T) \cdot \frac{2}{(l+1)\cdot l}$
\ENDFOR

\STATE $(TP,SP)$ = SORT-DECREASING$(TP, SP(.))$ // Sort $TP$ paths based on the corresponding scores $SP$ of the paths
\FOR {$i=1 \ldots M$}
\STATE $Paths(u)=$ JOIN $( \{u ``\_''TP[i]\} , Paths(u))$  // Add $u$ to the path 
\STATE $Score(u) =$ JOIN $( \{SP[i]\} , Score(u))$
\ENDFOR
\STATE \textbf{Return} $Paths(u), Score(u)$ 
\end{algorithmic}
\end{algorithm}

Algorithm \ref{alg:02} describes the memoized algorithm that computes the top $M-$ paths from a vertex $u$ to the \textit{sink} vertex $u_{sink}$ ($TOP-M-PATHS-FOR-VERTEX (u, E, PS) $). It first recovers the top $M-$ paths from all the neighbors of $u$ having an incoming edge from $u$ into the array $TP$ and their scores in $SP$ (Lines 1-6). 
As the distance $d(u,u_{sink})$ is greater than the distance of its neighboring vertex $s$, ($d(u,u_{sink}) > d(s,u_{sink})$), the arrays $Paths(s)$ and $Score(s)$ have already been computed (Algorithm \ref{alg:01} in line 9). The score for each of the path in $TP$ is updated when adding the vertex $u$ to the path. The path scores are updated taking into account memberships in the set $PS$ (Lines 7-20). The penalty term ($pen$) is proportional to the length of the path $T$ (Line 16). The paths stored in the array $TP$ are sorted based on their updated scores (Line 21). The first $M-$paths in the array $TP$ and their scores $SP$ are stored as the paths through vertex $u$ to the \textit{sink} vertex $u_{sink}$ (Lines 22-26). 

\subsubsection{Time complexity analysis of ViPRA} 
ViPRA runs Algorithm \ref{alg:02} as its sub-routine and the time complexity of the two algorithms is evaluated together. Lines 1-7 of ViPRA (Algorithm \ref{alg:01}) takes $O(|V_c|+ |V_c|\log{|V_c|})$ time to initialize and then sort the vertices. The $|V_c|$ calls to the sub-routine TOP-M-PATHS-FOR-VERTEX$ (N[i], E_c, PS) $ take $O(M\cdot |E_c|)$ time (Lines 8-10 of ViPRA, Lines 1-6 of Algorithm \ref{alg:02}). Each edge $(u,v) \in E_c$ is encountered at most $M$ times to store the top $M-$paths for the vertex $u$. Lines 7-20 in Algorithm \ref{alg:02} look at each path $T$ in array $TP$ and query for vertex-pairs $(u,w) \in PS$. The number of queries are proportional to the length of the path $T$. Thus, the total number of queries is bounded by $O(|TP| \cdot |T|)$. As the length of path increases linearly to the maximum depth in the graph $G$ ($O(|V_c|$), and the size of $|TP|$ is bounded by $O(M)$, the total time complexity for lines 7-20 is $O(M\cdot |V_c|^{2})$. As lines 21-26 perform a sorting operation for $|TP|$ elements at a time, its time complexity is bounded by $O(M\cdot|E_c|\log{(L\cdot |E_c|)})$. Thus overall running time of ViPRA is $O(M\cdot |V_c|^{2} + M\cdot |E_c| \log{(M\cdot |E_c|)})$. Notice that the running time of ViPRA increases log linearly with the parameter $M$ of the algorithm, and is quadratic in the input graph parameters, namely $G_c(V_c,E_c)$.

\ifCLASSOPTIONcompsoc
  \section*{Acknowledgments}
\else
  \section*{Acknowledgment}
\fi
This work was supported by the National Science Foundation (Grant numbers 1421908, 1533797 to RA, MP); National Evolutionary Synthesis Center (NesCENT) (through a researcher fellowship to RM).

\ifCLASSOPTIONcaptionsoff
  \newpage
\fi

\end{document}